\keywords{dynamic databases, first-order queries, enumeration, bounded degree, low degree}
  \renewcommand{\nc}[1]{\newcommand{#1}}
  \newcommand{\rnc}[1]{\renewcommand{#1}}
  \nc{\cal}[1]{\ensuremath{\mathcal{#1}}}
  \nc{\e}{\epsilon}
  \nc{\ez}{\delta}
  \nc{\ph}{\varphi}
  \nc{\w}{\overline{w}}
  \nc{\x}{\overline{x}}
  \nc{\y}{\overline{y}}
  \nc{\z}{\overline{z}}
  \rnc{\a}{\overline{a}}
  \rnc{\b}{\overline{b}}
  \rnc{\c}{\overline{c}}
  \rnc{\d}{\overline{d}}
  \nc{\ov}[1]{\ensuremath{\overline{#1}}}
  \nc{\A}{\mathcal{A}}
  \nc{\B}{\mathcal{B}}
  \nc{\C}{\mathcal{C}}
  \nc{\D}{\mathcal{D}}
  \nc{\X}{\mathcal{X}}
  \nc{\G}{\mathcal{G}}
  \nc{\N}{\mathcal{N}}
  \nc{\T}{\ensuremath{{\cal T}}}
  \rnc{\H}{\mathcal{H}}
  \nc{\set}[1]{\ensuremath{\{#1\}}}
  \nc{\Set}[1]{\ensuremath{\big\{#1\big\}}}
  \nc{\setc}[2]{\set{#1 \ | \ #2}}
  \nc{\Setc}[2]{\Set{#1 \ \big| \ #2}}
  \nc{\size}[1]{|\!|#1|\!|}
  \nc{\setsize}[1]{\ensuremath{|#1|}}
  \nc{\Setsize}[1]{\ensuremath{\big|#1\big|}}
  \nc{\NN}{\ensuremath{\mathbb{N}}}
  \nc{\NNpos}{\ensuremath{\NN_{\scriptstyle\geq 1}}}
  \nc{\RR}{\ensuremath{\mathbb{R}}}
  \nc{\RRpos}{\ensuremath{\RR_{\scriptstyle > 0}}}
  \nc{\nul}{\textsf{Null}}
  \nc{\dist}{\mathrm{dist}}
  \nc{\dst}[1]{\ensuremath{\mathrm{dist}_{\le #1}}}
  \nc{\FO}{\ensuremath{\textup{FO}}\xspace}
  \nc{\sk}{\ensuremath{{\textup{SKIP}}}}
  \nc{\Iskip}{\ensuremath{S}} 
  \nc{\Jskip}{\ensuremath{S'}} 
  \nc{\SC}{\ensuremath{\textit{SC}}}
  \nc{\AWstar}{\ensuremath{\textup{AW[$*$]}}}
\theoremstyle{plain}
	\newtheorem{cla}[thm]{Claim}
\theoremstyle{definition}
	\newtheorem{defn}[thm]{Definition}
\begin{document}

\title[]{Dynamic Query Evaluation Over Structures With Low 
Degree}

\author[A.~Vigny]{Alexandre Vigny}	
\address{Universität Bremen, Germany}	
\email{vigny@uni-bremen.de}  
\thanks{Big thanks to the anonymous reviewers who read the firsts versions of that paper. Special thanks to Luc Segoufin and Sophie Dramé-Maigné for helping me with their thoughtful comments}	

%





\begin{abstract}
  \noindent We consider the evaluation of first-order queries over classes of databases that have \emph{bounded degree} and \emph{low degree}.
	More precisely, given a query and a database, we want to efficiently test whether there is a solution, count how many solutions there are, or be able to enumerate the set of all solutions.

  Bounded and low degree are rather natural notions and both yield efficient algorithms.
	For example, Berkholz, Keppeler, and Schweikardt showed in 2017 that over databases of bounded degree, not only any first order query can efficiently be tested, counted and enumerated, but the data structure used can be updated when the database itself is updated.

	This paper extends existing results in two directions. First, we show that over classes of databases with low degree, there is a data structure that enables us to test, count and enumerate the solutions of first order queries. This data structure can also be efficiently recomputed when the database is updated. Secondly, for classes of databases with bounded degree we show that, without increasing the preprocessing time, we can compute a data structure that does not depend on the query but only on its quantifier rank. We can therefore perform a single preprocessing that can later be used for many queries.
\end{abstract}

\maketitle


\section{Introduction}

Query evaluation is one of the most central tasks of a database system and a vast amount of literature is devoted to the complexity of this problem. Given a database $\D$ and a query $\ph$, the goal is to compute the set $\ph(\D)$ of all solutions for $\ph$ over $\D$. Unfortunately, the set $\ph(\D)$ might be much bigger than the database itself, as the number of solutions may be exponential in the arity of the query.  It can therefore be insufficient to measure the complexity of answering $\ph$ on $\D$ only in terms of the total time needed to compute the complete result set $\ph(\D)$.  Many other algorithmic tasks have been studied to overcome this situation. We can for instance only compute the number of solutions or simply test whether there exists a solution.

In addition, we consider evaluation of queries in a ``dynamic setting'', where the databases may be updated by inserting or deleting tuples. In this setting, given a database $\D$ and a query $\ph$, the goal is to compute a \emph{data structure} that represents the result of evaluating $\ph$ on $\D$. After every database update, the data structure must be adapted so that it represents the result of evaluating $\ph$ on $\D'$, the new database.

Since we do not wish to compute the entire set of solutions, we want the data structure to enable us to efficiently perform the following tasks:
  \begin{itemize}
    \item \emph{Model check:} Decide whether $\ph(\D)=\emptyset$.
    \item \emph{Enumerate:} Enumerate the set $\ph(\D)$ i.e.~to output one by one all elements of $\ph(\D)$.
    \item \emph{Test:} Given a tuple $\a$, decide whether it is in $\ph(\D)$.
    \item \emph{Count:} Compute the number $\#\ph(\D)$ of solutions.
  \end{itemize}

The time needed to solve these problems is split between three sub-processes.
The first one is the \emph{preprocessing}, i.e.~the process that computes the data structure. The second one is the \emph{update}, i.e.~the process that adapts the data structure when the database is updated. The third and final process is the \emph{answering}, or \emph{enumeration}. For the enumeration problem, the {\em delay} is the maximum time between the outputs of any two consecutive solutions.

We want to perform these processes as efficiently as possible.
For the answering or enumeration process, the best we can hope for is to perform it in constant time (or with constant delay): depending only on the query and independent from the size of the database. For the preprocessing, an ideal time would be linear time: linear in the size of the database with a constant factor depending on the query. The ideal update time depends on what kind of updates are allowed. If we only allow \emph{local updates} that only affect a fixed number of tuples, one can hope for update procedures working in constant or almost constant time.

\subsection{The need of restrictions}
Efficient algorithmic procedures for these problems cannot be achieved for all queries over all databases. Even without updates the model checking of first-order ($\FO$) queries over arbitrary relational structure is, by definition, $\AWstar$-complete~\cite[Definition~8.29]{DBLP:series/txtcs/FlumG06}. But for restricted classes of queries and databases, several evaluation algorithms with linear time preprocessing have been obtained, first without updates.

This is the case for arbitrary databases, when paired with free-connex acyclic conjunctive queries~\cite
{DBLP:conf/csl/BaganDG07}, some unions of conjunctive queries~\cite
{DBLP:conf/pods/CarmeliK19}, or conjunctive queries with bounded submodular width~\cite
{DBLP:conf/mfcs/BerkholzS19}.
Allowing for more powerful query languages requires restrictions on the database part. Evaluation algorithms with linear preprocessing exist for
$\FO$ queries over classes of databases with bounded degree~\cite{DBLP:journals/tocl/DurandG07,DBLP:journals/corr/abs-1105-3583}, bounded expansion~\cite{DBLP:conf/pods/KazanaS13}, and for monadic second-order (MSO) queries over classes of databases of bounded tree-width~\cite{DBLP:conf/csl/Bagan06,DBLP:journals/tocl/KazanaS13}.

In some scenarios only pseudo-linear preprocessing time has been achieved. A query can be enumerated with constant delay after pseudo-linear preprocessing if for all $\epsilon$ there exists an enumeration procedure with constant delay (the constant may depend on $\epsilon$) and preprocessing time in $O(\size{\D}^{1+\epsilon})$, where $\size{\D}$ denotes the size of the database. This has been achieved for $\FO$ queries over classes of databases with low degree~\cite{DBLP:conf/pods/DurandSS14}, locally bounded expansion~\cite{DBLP:conf/icdt/SegoufinV17}, and for nowhere dense classes of databases~\cite{DBLP:journals/jacm/GroheKS17,DBLP:journals/corr/GroheS17,DBLP:conf/pods/SchweikardtSV18}.

If we allow updates, the problem is at least as hard.
This more restricted line of research is quite recent but already some dynamic procedures have been provided: for conjunctive queries over arbitrary relational structures~\cite{DBLP:conf/pods/BerkholzKS17,DBLP:conf/icdt/BerkholzKS18}, for MSO queries over words, trees, and graphs with bounded tree-with~\cite{DBLP:conf/csl/LosemannM14,
DBLP:conf/icdt/AmarilliBM18,DBLP:conf/pods/NiewerthS18}, and for \FO queries over graphs with bounded degree~\cite{DBLP:journals/tods/BerkholzKS18} and bounded expansion~\cite{DBLP:conf/pods/Torunczyk20}.

Interested readers can look at the third chapter (Figure 3.1) of the Thesis~\cite{DBLP:phd/hal/Vigny18}, or Figures 4 and 5 of~\cite{DBLP:conf/pods/0002NOZ20} for a more detailed overview of these results.

\subsection{Main results}

Our first result concerns classes of databases with low degree. We show that for such classes, and for any first order query, there is a data structure that can be computed in time $O(n^{1+\e})$ (where $n$ is the size of the database) and recomputed in time $O(n^\e)$ when the database is updated.
Using the data structure, the answering processes for the model-checking, testing and counting problems can be performed in constant time. Additionally, the set of solutions can be enumerated with constant delay, without further preprocessing.
See Theorems~\ref{thm-low-mc} and~\ref{thm-low-all} for formal statements of that result.

The second result concerns classes of graphs with bounded degree. We expand on existing results and show that a single linear time preprocessing can be used to model-check, test, and count in constant time, and also to enumerate with constant delay the solutions of a whole family of queries. We are also able to update the results in constant time when the database is updated.
We already knew that it was possible when only one query is given as input~\cite{DBLP:journals/tods/BerkholzKS18}.
Additionally, one can note that if a finite family of queries is given as input, it is easy to perform one preprocessing for each query. The parameter would simply be the sum of the sizes of the queries.
However, in our contribution,
the preprocessing time is smaller than the time it would take to perform one preprocessing for each individual query of the inputed family. Even better, our running time matches those of the algorithms that take as input only one query: triply exponential in the quantifier rank of the query, or in our case in the maximal quantifier rank amongst the possible queries.
See Theorems~\ref{thm-bound-mc} and~\ref{thm-bound-all} for formal statements of that result.

\subsection{Related techniques}\label{seq-techniques}
While the results are new, the techniques used to obtain them are not. We are using Gaifman local form of first order formulas, which is already at the core of the existing results for structures with bounded and low degree~\cite{DBLP:journals/tods/BerkholzKS18,DBLP:conf/pods/DurandSS14,DBLP:journals/corr/abs-1105-3583}.

Enumeration algorithms with constant delay often need a mechanism used to skip irrelevant parts of the input.
An adequate technique (with small variations) has been used several times in previously published results. The firsts appearances where in~\cite[Section~3.5]{DBLP:conf/pods/DurandSS14}
and~\cite[Section~4]{DBLP:conf/pods/KazanaS13}. Some variations have been used again in~\cite[Lemma 10.1]{DBLP:journals/tods/BerkholzKS18} and~\cite[Lemma 3.10]{DBLP:conf/pods/SchweikardtSV18}. We use a similar structure described in Lemma~\ref{lem-skip}.
Concerning the counting part of our results, we use a classic inclusion-exclusion argument. This argument is materialized in Lemma~\ref{lemma-count-decomposition}, which is analogous to what can be found in~\cite[Section~7]{DBLP:conf/icdt/SegoufinV17}.

The proofs of our main results follow the same path. In both cases we focus first on centered queries (Definition~\ref{defn-centered}). These queries only talk about the neighborhood of one node. When the degree of the database is small, that neighborhood is also small. It is therefore possible to entirely compute all solutions of a given centered query in (pseudo) linear time. The main results are then proved using the fact that all first order queries are combinaisons of centered ones together with adequate tools like Lemmas~\ref{lem-skip} and~\ref{lemma-count-decomposition}.

\subsection{Organization}
The rest of the paper is structured as follows.
Section~\ref{seq-prelim} provides basic notations and additional information about our setting and our computational model.
In Section~\ref{seq-tech-def} we state refinements of the results our paper is build on, mainly, Gaifman locality of first-order queries.
Finally, Section~\ref{seq-low-degree} and~\ref{seq-bound} prove the two main results respectively.


\section{Preliminaries}\label{seq-prelim}

By $\NN$ we denote the non-negative integers, and we let $\NNpos := \NN\setminus\set{0}$. Throughout this paper, $\e$ will always be a positive real number, and $d$, $i$, $j$, $k$, $\ell$, $m$, $n$, $p$, $q$, $r$ will be elements of $\NN$.

\subsection{Colored graphs and first-order queries}

We define \emph{undirected colored graphs} (or simply graphs) as sets of vertices equipped with a symmetric binary relation (edges) and some unary predicates (colors).

As mentioned in the introduction, our algorithms work for classes of databases. However we will only prove results for classes of graphs. Our algorithms can be adapted to the general case of databases (viewed as relational structure) via standards techniques. Namely, one can take any relational structure and compute a colored version of its adjacency graph.
Interested readers can look at~\cite{DBLP:conf/pods/SchweikardtSV18} {\it (Section~2, From databases to colored graphs)} to see the details of this transformation.

We denote by $|G|$ the size $|V|$ of its vertex set, and $\|G\|$ the size of (an encoding of) $G$. Without loss of generality we assume that the set $V$ comes with a linear order. If not we arbitrarily choose one. This order induces a lexicographical order among the tuples of vertices.

In this paper, queries are first-order formulas. We assume familiarity with first-order logic ($\FO$), over relational structures (see~\cite{DBLP:books/aw/AbiteboulHV95,DBLP:books/sp/Libkin04} for more information). We use standard syntax and semantics for \FO. In particular we write $\ph(\x)$ to denote the fact that the free variables of the query $\ph$ are exactly the variables in $\x$. The length of $\x$ is called the \emph{arity} of the query. The \emph{size} of a query $\ph$ is the number of symbols needed to write down the formula and is denoted by $|\ph|$, and the \emph{quantifier rank} of a query $\ph$ is the depth of nesting of its quantifiers.

A \emph{sentence} (or boolean query) is a formula with no free variable, i.e. of arity~0.

We write $G \models \ph(\a)$ to indicate that $\a$ is a solution for $\ph$ over $G$ and note $\ph(G)$ the set of tuples $\a$ such that $G \models \ph(\a)$.

\subsection{Model of computation}

As customary when dealing with linear time, we use Random Access Machines (RAM) with uniform cost measure as a model of computation. More precisely, we need our computational model to perform two tasks.
The first one is to store (ordered) lists with the ability of testing whether an element is in the list in constant time. The ordered version would require logarithmic time to insert (resp. remove) an element to (resp. from) the list.
The second one is to store the values of a given function, and later retrieve the image of a given input in constant time. These requirements are standards, see e.g.~\cite{DBLP:conf/pods/BerkholzKS17} {\it (Section~2, Dynamic Algorithms for Query Evaluation)}.

Both come for free in the RAM model. For any fix integer $k$, it is possible to associate any $k$-tuple of integers to a specific cell of the memory and, given that tuple, to access this cell in constant time. This together with the fact that we can assume that the memory is initially empty (which require a technique called the lazy array initialisation~\cite{moret1991algorithms}) enable us to perform these two tasks.

While this works on the RAM model, it shows that we are getting away from real-life computers.

\subsection{Parametrized complexity}

All problems encountered in this paper have two inputs: a graph $G$ and a query $\ph(\x)$. However, they play different roles as $\size{G}$ is often very large while $|\ph|$ is generally small. We therefore adopt the parameterized complexity point of view. When we say \emph{linear time} we mean in time $O(\size{G})$, the constants hidden behind the ``big $O$'' depending on $\ph$, on the class $\C$ of graphs under investigation, and possibly on further parameters that will be clear from the context.

We say that a problem is solvable in \emph{pseudo-linear time} if, for all $\e$, it can be solved in time $O(\size{G}^{1+\epsilon})$. In this case, the constant factor also depends on $\e$. If a subroutine of a procedure depending on $\e$ produces an output of size $O(\size{G}^{\e})$ we will say that the output is \emph{pseudo-constant}.

\subsection{Distance and neighborhoods}

Given a graph $G=(V,E)$ and two elements $a$ and $b$ of $V$, the \emph{distance} between $a$ and $b$ noted $\dist(a,b)$ is the length of the shortest path between $a$ and $b$ (the number of edges). If there is no such path, $\dist(a,b):=\infty$. The notion of distance extends to tuples in the usual way, i.e., the distance between two tuples $\a$ and $\b$ is the minimum of the distances between $a_i$ and $b_j$ over all $i,j$.

For a positive integer $r$, we write $N_r^G(a)$ for the set of all elements of $V$ at distance at most $r$ from $a$. The $r$-neighbor\-hood of $a$ in $G$, denoted $\N^G_r(a)$, is the substructure of $G$ induced by $N_r^G(a)$. Similarly, for a tuple $\a$ of arity $k$ we let $N_r^G(\a):=\bigcup_{i\le k}N_r^G(a_i)$, and we define $\N_r^G(\a)$ as the substructure of $G$ induced by $N_r^G(\a)$.

\subsection{Graphs with bounded and low degree}

The \emph{degree} of an undirected graph $G=(V,E)$ is the maximal number, over all elements $a$ of $V$, of elements $b$ in $V$ such that $(a,b)\in E$. We note $d(G)$ the degree of $G$.

A class $\C$ of graphs has \emph{bounded degree} if there is an integer $d$ such that every graph of $\C$ has degree at most $d$.

A class $\C$ of graphs has \emph{low degree} if there is a function $f(\cdot)$ such that for any $\e>0$, and for any graph $G$ of $\C$, we have $d(G) \le f(\e)\cdot|G|^\e$. Examples of classes of graphs with low degree are classes of graphs with bounded degree and classes of graphs $G$ with $d(G)\le\log(|G|)$.

\subsection{Updates}

We allow graphs to be updated. A \emph{local update} of a graph $G$ is a graph $G'$ obtained from $G$ by inserting or deleting a node, an edge, or by modifying the membership of a given node to a unary predicate (relabeling). In this paper, we restrict our attention to classes of graphs with bounded or low degree. Therefore when updating a graph, the update is allowed only if the bounds on the degree are maintained in the new graph.

For classes of graphs with bounded degree, given a number $d$ it is easy to check in time $O(d)$ whether when adding a given edge to a graph of degree at most $d$, we make the new graph to have a degree exceeding $d$ or not.

For classes of graphs with low degree, the problem is a little bit trickier. A class of graph $\C$ has \emph{effectively low degree} if there is a {\em computable} function $f(\cdot)$ such that for every $\e>0$, and for all graph $G$ in $\C$, we have $d(G)\le f(\e)\cdot|G|^\e$.


If $\C$ has effectively low degree, then it is possible (for example) to maintain the ordered list of every vertex's degree in time $\log(|G|)$. Then for any $\e>0$, it is possible to check in time $\log(|G|) = O(f(\e)\cdot|G|^\e)$ whether the degree of $G$ still remains below that bound. In this paper, we only talk about classes of graphs with effectively low degree and therefore often omit the ``effectively'' for better readability.
When dealing with update, the following definition and lemma (although quite elementary) simplify every proof.
\begin{defn}
  Let $G$ be a graph and $G'$ a local update of $G$. For an integer $r$ in $\NN$, a node $a$ is \emph{$r$-impacted} if it is: (i) only in $G$, or (ii) only in $G'$, or (iii) in both but $\N^G_r(a) \neq \N^{G'}_r(a)$.
\end{defn}

\begin{lem}\label{lem-update}
  Let $G$ be a graph and $G'$ a local update of $G$. For any integer $r$ in $\NN$, the list $L_u$ of all $r$-impacted nodes has size $O(d^r)$ and can be computed in time $O(d^r)$, where $d$ is the degree of $G$.
\end{lem}

\section{Technical definitions}\label{seq-tech-def}
In this section we explain and define some tools that are used throughout the rest of the paper.
We first define the notions of distance queries, distance types, and later explain how they are used to express \FO queries in a local and succinct way.

\subsection{Distance and types}

A distance query is a binary query that only talks about the distance of two nodes in a graph. For all $r\in \NN$, the query $\dist_{\le r}(a,b)$ states that there is a path of length at most $r$ between $a$ and $b$. It can formally be defined as follows:
$$\dst{0}(x,y) :=  (x=y)  \quad\quad \dst{r+1}(x,y) :=   \dst{r}(x,y) ~\vee ~ \exists z ~ \big( E(x,z) \wedge \dst{r}(z,y) \big).$$
Once this is defined, one can also make use of those variations: $\dist_{>r}(x,y) := \neg\dst{r}(x,y)$ and
$\dist_{=r}(x,y) := \dst{r}(x,y) \wedge \dist_{>r-1}(x,y)$.

\begin{defn}[Distance type]
  For a graph $G=(V,E)$ and an integer $r$ in $\NN$, the \emph{$r$-distance type} of a tuple $\a=(a_1,\ldots,a_k)$ is the undirected graph $\tau_r^G(\a)$ whose nodes are $[1,k]$ and edges are exactly the pairs $(i,j)$ such that $G\models\dst{r}(a_i,a_j)$. We denote by ${\cal T}_k$ the set of all possible distance types with $k$ elements.
\end{defn}

For every integers $r$, $k$ and distance type $\tau\in{\cal T}_k$, there is a query $\Delta_{r,\tau}(\x)$ such that for every graph $G$ and tuple $\a$, we have that $G\models \Delta_{r,\tau}(\a)$ if and only if $\tau_r^G(\a) = \tau$. Furthermore $\Delta_{r,\tau}(\x)$ is a boolean combination of distance queries.

For example, if $k=3$ and $\tau$ only contains the edge $(1,2)$, then $\Delta_{r,\tau}(x,y,z) := \dst{r}(x,y)\wedge\dist_{>r}(x,z)\wedge\dist_{>r}(y,z)$.

\subsection{Normal form for $\FO$ queries}

The classes of graphs we consider have
small neighborhoods. It is therefore interesting to use the locality of first order logic.

\begin{defn}[Local queries]\label{defn-local}
  A query $\ph(\bar x)$ is said to be $r$-local if all its quantifications are relative to elements at distance at most $r$ from one of its free variables $\bar x$,
  i.e.~if it uses quantifications of the form $\exists y~ \dst{r}(y,\bar x) \land \cdots$ and $\forall y~ \dst{r}(y,\bar x) \Rightarrow \cdots$ .
\end{defn}

\begin{thm}[Gaifman's theorem. Originally~\cite{Gaifman1982105}, see~{\cite[Theorem 4.22]{DBLP:books/sp/Libkin04}}]\label{thm-nf-mc}
  For every $\FO$ formula $\phi(\x)$, there are $r,s \in \mathbb{N}$ such that $\phi$ is equivalent to a boolean combination of $r$-local queries and sentences of the form:

  \begin{equation}\label{eq-nf-mc}
  \exists x_1\ldots x_s ~\Big( \underset{1\le i \le j \le s}{\bigwedge}~\dist_{>2r}(x_i,x_j) \wedge \underset{1\le i\le s}{\bigwedge} \alpha(x_i) \Big)
  \end{equation}

  Here $\alpha(x)$ is an $r$-local query. Moreover, this combination is computable from $\phi$, $r = 2^{O(q)}$, and $s\le q$ where $q$ is the quantifier rank of $\ph$.
\end{thm}

When working with non boolean queries, locality is not enough. We also want to know how the free variables are scattered across the graph.

\begin{defn}[Local centered queries]\label{defn-centered}
  An $r$-local query $\alpha(\x)$ is said to be \emph{$r$-local centered} if for every graph $G$ and tuple $\a$, if $G\models\alpha(\a)$ then the $2r$-distance type of $\a$ has only one connected component.
\end{defn}
Note that an $r$-local centered query of arity $k$ is also $2kr$-local around any of its free variables.
It is also worth mentioning that an $r$-local query that is unary is necessarily centered. Similarly, the query $\dst{2r}(x,y)$ is $r$-local centered, and $\Delta_{2r,\tau}(\x)$ is $r$-local centered if and only if $\tau$ has exactly one connected component.

\begin{cor}[Of Theorem~\ref{thm-nf-mc}]\label{cor-fo-queries}
  For every integers $k$ and $q$, there is an integer $r$ such that every query $\ph(\x)$ of arity $k$ and quantifier rank $q$ is equivalent to a boolean combination of sentences of the form of Equation~\ref{eq-nf-mc} and $r$-local queries. Furthermore, once all sentences have been replaced by their truth value, we are left with a strict disjunction of queries of the form:
  \begin{equation}\label{eq-nf-k} \alpha_1(\x_{1})\wedge \ldots\wedge \alpha_p(\x_{p}) \wedge \Delta_{2r,\tau} (\x),\end{equation}
    where $r=2^{O(q)}$, $\alpha_i(\x_{i})$ are $r$-local centered queries, $(I_1,\ldots,I_p)$ is a partition of $[k]$ and $\tau$ is a graph whose connected components are exactly $(I_1,\ldots,I_p)$, with $I_i :=\{j\le k ~:~ x_j \in \x_i \}$.
\end{cor}

To obtain this corollary from Theorem~\ref{thm-nf-mc}, we need (among other things) to take a disjunction over all possible $\tau$ in ${\cal T}_k$. For each $\tau$, the remaining local query can be simplified as all atoms of the form $E(x_i,x_j)$ can 
be set to false, as soon as $i$ and $j$ are not in the same connected component of $\tau$.

\section{Structures with low degree}\label{seq-low-degree}
This section is devoted to the proof of our first contribution. 
\begin{thm}\label{thm-low-mc}
  For every class of graphs $\C$ with effectively low degree, there is a computable function $f$ and an algorithm that upon input of a graph $G$ in $\C$, an $\FO$ sentence $\phi$ and an $\e >0$, computes a data structure that:
  \begin{itemize}
    \item can be computed in time $O\big(f(|\phi|) \cdot |G|^{1+\e}\big)$,
    \item can be recomputed in time $O\big(f(|\phi|) \cdot |G|^\e\big)$ when $G$ is subject to a
    local update, and
    \item enables us to test in time $O(1)$ whether $G\models \phi$.
  \end{itemize}
\end{thm}
This theorem is proved in Section~\ref{subseq-low-mc}.

\begin{thm}\label{thm-low-all}
  For every class of graphs $\C$ with effectively low degree, there is a computable function $f$ and an algorithm that upon input of a graph $G$ in $\C$, an $\FO$ query $\ph(\x)$ and an $\e >0$, computes a data structure that:
  \begin{itemize}
    \item can be computed in time $O\big(f(|\ph|) \cdot |G|^{1+\e}\big)$, and
    \item can be recomputed in time $O\big(f(|\ph|) \cdot |G|^\e \big)$ when $G$ is subject to a local update.
    \item Furthermore, the data structure enables us:
    \begin{enumerate}
      \item~{\textit{\small (enum:)}} to enumerate with delay $O\big(f(|\ph|)\big)$, and without further preprocessing the set $\ph(G)$ of solutions for $\ph(\x)$ over $G$,
      \item~{\textit{\small (test:)}} upon input of a tuple $\a$, to test in time $O\big(f(|\ph|)\big)$ whether $G\models \ph(\a)$, and
      \item~{\textit{\small (count:)}} to compute in time $O\big(f(|\ph|)\big)$ the number $\#\ph(G)$ of solutions for $\ph$ over $G$.
    \end{enumerate}
  \end{itemize}
\end{thm}


The proof of these two Theorems follow standard paths. We show in the next section that the Theorems hold for $r$-local centered queries (see Definition~\ref{defn-centered}). We then generalize to full \FO with combinatorial techniques.
As explained in Section~\ref{seq-techniques}, these techniques are similar to what can be found in the static case~\cite{DBLP:conf/pods/DurandSS14} or the dynamic case for structures with bounded degree~\cite{DBLP:journals/corr/abs-1105-3583,DBLP:journals/tods/BerkholzKS18}.


\subsection{Centered queries}\label{subseq-low-unary}

We restrict our attention to $r$-local centered queries. For these queries, and thanks to the degree bound,
it is actually possible to compute the entire result and to recompute it when the graph is updated.
\begin{lem}\label{lem-low-unary}
    Let $\C$ be a class of graphs with effectively low degree. For every graph $G$ in $\C$ with $n:=|G|$, every $\e>0$, every $r\in \NN$ and any $r$-local centered query $\alpha(\x)$, we can:
  \begin{itemize}
    \item compute $L_\alpha$ and $\#\alpha$ in time $O(n^{1+\e})$, and
    \item recompute $L_\alpha$ and $\#\alpha$ in time $O(n^\e)$ whenever a local update occurs in $G$.
  \end{itemize}
  Here, $L_\alpha$ is the ordered list of tuples from $G$ satisfying $\alpha(\x)$. 
  By $\#\alpha$ we mean $|\{\a~:~ G\models \alpha(\a)\}|$, which is also $|L_\alpha|$.
\end{lem}

\begin{proof}[Proof of Lemma~\ref{lem-low-unary}]
  The challenging part of the proof of Lemma~\ref{lem-low-unary} is the update part. The preprocessing one can be performed using the update procedure as follows:
  \begin{itemize}
    \item Start with $G:= \emptyset$. We have $\#\alpha$ equals $0$ and $L_\alpha$ is empty.
    \item Add all vertices one by one and for each perform the update procedure.
    \item Add all edges one by one and for each perform the update procedure.
  \end{itemize}
  This gives a preprocessing procedure with the desired complexity. We now turn to the update procedure.

  Let $\C$ be a class of graph with low degree. Let $G=(V,E)$ be a graph in $\C$. Let $\e>0$, $r\in \NN$ and $\alpha(\x)$ be an $r$-local centered query and $k$ its arity. Let $\ez>0$ small  enough that: $\e \ge 6 \ez k^2 r \cdot |\alpha|$ and $\log(n) \le f(\ez)\cdot n^\ez$. (Recall $f(\ez)$ from the definition of classes of graphs with effectively low degree.)

  Let $G'=(V',E')$ the graph obtained from $G$ after the update.
  Let $n=|V'|$.
  We know that $d(G') \le f(\ez)\cdot n^\ez$. Let $L_u$ be the list of nodes $2kr$-impacted by the local update. By Lemma~\ref{lem-update}, we have that $|L_u|$ has size at most $O(d(G)^{2kr}) = O(n^{\ez \cdot 2kr})$ and can be computed in similar time. We then apply the following algorithm:

  For every $r$-impacted node $a_1\in L_u$, we study every possible tuple $\a = (a_1,\ldots,a_k) \subseteq N_{2kr}^{G'}(a_1)$.
  Then for every such tuple, we test by brut fore whether $\N_{2kr}^{G'}(a_1) \models \alpha(\a)$. As $\alpha$ is $r$-local centered, this is true if and only if  $G \models \alpha(\a)$. We also test whether $\a$ was already in $L_\alpha$. We finally update $L_\alpha$ and $\#\alpha$ accordingly.

  For the complexity analysis, the size of $\N_{2kr}^{G'}(a_1)$ is bounded by $d(G)^{2kr}$, i.e.~bounded by $n^{2\ez kr}$. Therefore the number of considered tuples is bounded by $(\N_{2kr}^{G'}(a_1))^k  =n^{2\ez k^2r}$.
  Then we need time $O(|\N_{2kr}^{G'}(a_1)|^{|\alpha|})$ to check with brute force whether $\N_{2kr}^{G'}(a_1) \models \alpha(\a)$, i.e.~$O(n^{2\ez kr \cdot |\alpha|})$.
  The procedures $add$ and $remove$ require time $O(\log(n))$ to keep $L_\alpha$ ordered, and thus $O(f(\ez)\cdot n^\ez)$. Therefore the time spend for every considered $\a$ is $O(n^{2\ez kr \cdot |\alpha|})$.
  Hence we spend time $O(n^{4 \ez k^2 r \cdot |\alpha|})$ for every element $a\in L_u$. At the end, the overall complexity is:\newline
   $O(|L_u|\cdot n^{4 \ez k^2 r \cdot |\alpha|}) = O(n^{6 \ez k^2 r \cdot |\alpha|}) = O(n^\e)$.

  This ends the proof of Lemma~\ref{lem-low-unary}
\end{proof}

We now have at our disposal a data structure that efficiently evaluates centered queries. In the upcoming sections we generalize it to deal with full \FO.

\subsection{The model checking problem}\label{subseq-low-mc}

This section proves Theorem~\ref{thm-low-mc}. The next lemma shows that Lemma~\ref{lem-low-unary} is enough to prove the result for sentences of the form~\ref{eq-nf-mc}. By Theorem~\ref{thm-nf-mc}, every first order sentence is a boolean combination of such sentences. As evaluating a boolean combination takes constant time (independent from the size of $G$), Theorem~\ref{thm-low-mc} follows.

\begin{lem}\label{lem-low-mc}
  For every class of graphs $\C$ with effectively low degree, for every $r$-local unary query $\alpha(x)$, for every integer $k$,  for every $G$ in $\C$ with $n:=|G|$, and for every $\e>0$, giving $L_\alpha$ and $\#\alpha$ as stated in Lemma~\ref{lem-low-unary}, we can test in time $O(n^\e)$ whether $G$ models
  $$\ph ~:= ~\exists x_1\ldots x_{k} ~\Big( \underset{1\le i < i' \le k}{\bigwedge}~\dist_{>2r}(x_i,x_{i'}) \wedge \underset{1\le i\le k}{\bigwedge} \alpha(x_i) \Big).$$
\end{lem}

\begin{proof}[Proof of Lemma~\ref{lem-low-mc}]
  Let $\C$ be a class of graphs with low degree, $G$ a graph of $\C$, $n:=|G|$, and $\e>0$.
  Let $\ez >0$ such that $\e \ge 3\ez r\cdot|\ph|$.

  We know that $d(G)\le f(\ez)\cdot n^\ez$. We then distinguish two cases whether $\#\alpha$ is smaller than $k\cdot d(G)^{2r}$.

  \begin{itemize}
    \item If $\#\alpha > k\cdot d(G)^{2r}$, there must be $k$ elements in $L_\alpha$ that are pairwise at distance greater than $2r$. In that case we know that $G\models\ph$.

    \item If $\#\alpha \le k\cdot d(G)^{2r}$, then $|N_r^G(L_\alpha)| \le k\cdot d(G)^{3r}$. Furthermore, it is always the case that $G \models \ph$ if and only if $\N_r^G(L_\alpha) \models \ph$. Therefore we can test whether $G\models \ph$ in time:
    $$ O\left( N_r^G(L_\alpha)^{|\ph|} \right)  \quad = \quad
    O\left( \big(k\cdot d(G)^{3r}\big)^{|\ph|} \right) \quad = \quad
    O\big(n^{3\ez r \cdot |\ph|}\big) \quad = \quad O\big(n^\e\big).$$\qedhere
  \end{itemize}
  This ends the proof of Lemma~\ref{lem-low-mc} and therefore of Theorem~\ref{thm-low-mc}
\end{proof}

\subsection{The enumeration problem}\label{subseq-low-enum}

The goal of this section is to prove the enumeration part of Theorem~\ref{thm-low-all}.
Thank to Corollary~\ref{cor-fo-queries}, every $\FO$ query is a boolean combination of sentences and $\FO$ queries of the form:
$\alpha_1(\x_1)\wedge \ldots \wedge \alpha_p(\x_p) \wedge \Delta_{2r,\tau}(\x).$
Furthermore, once the sentences have been replaced by their truth value, we are left with a strict disjunction of queries of the form above.

We can compute the data structure of Lemma~\ref{lem-low-mc} for each sentence of the decomposition. Unfortunately, the value of those sentences over the graph can be impacted by the update. Fortunately, there is only a constant number of possible valuations for the sentences.

Every possible evaluation of the sentences gives us a strict disjunction of queries of the form above. For each, we compute a data structure (described below) allowing us to enumerate their solutions with constant delay.
For the enumeration phase, we first test which sentences are true. This gives us which strict disjunction of queries of the form above needs to be enumerated. We then simultaneously enumerate all of their solutions in lexicographical order and with constant delay.

We therefore only prove the following lemma.
\begin{lem}\label{lem-low-enum}
  Let $\ph(\x)$ be an $\FO$ query of the form $\ph(\x) = \alpha_1(\x_1)\wedge \ldots \wedge \alpha_p(\x_p) \wedge \Delta_{2r,\tau}(\x)$,
  where $\alpha_i$ are $r$-local centered queries. Let $G$ be a graph from a class $\C$ with effectively low degree and $n:=|G|$. For every $\e>0$, there is a data-structure that:
  \begin{itemize}
    \item can be computed in time $O(n^{1+\e})$,
    \item can be updated in time $O(n^{\e})$ when $G$ is subject to local
    updates, and
    \item allows to enumerate $\ph(G)$ with constant delay and no additional preprocessing.
  \end{itemize}
\end{lem}

The proof goes by induction on $k$, which is the number of variables in $\ph$.

\begin{proof}[Proof of Lemma~\ref{lem-low-enum}]~
  Similarly to the proof of Lemma~\ref{lem-low-unary}, we only provide the update procedure as the preprocessing can be viewed as performing updates from the empty graph to the desired one.

  \medskip\noindent
  $\bullet$ If $k=1$. Then $p=1$ and this case has already been proved by Lemma~\ref{lem-low-unary}. Once $L_\ph$ has been stored on a RAM, we can simply run through it.

  \medskip\noindent
  $\bullet$ If $k>1$. Let $\z = (\x_1,\ldots,\x_{p-1})$. Consider the following
  two queries:
  $$ \ph_1(\z) := \exists \y_p ~ \ph(\z,\y_p) \quad \text{and} \quad
  \ph_2(\x_p)  := \alpha_p(\x_p).  $$

  It is easy to see that both $\ph_1$ and $\ph_2$ have strictly less than $k$
  variables. Therefore, the induction hypothesis holds for $\ph_1$ and $\ph_2$.
  A simple (and wrong) idea would be to enumerate all solutions $\c$ in $\ph_1(G)$ and for each of them enumerate all solutions $\a_p$ in $\ph_2(G)$ and output $(\c,\a_p)$ if and only if $G\models \dist_{>2r}(\c,\a_p)$. The issue is that given a tuple $\c$, we might encounter, in a row, a non constant amount of tuples $\a_p$ that do not satisfy this condition.

  We need, for any tuple $\c$ and any tuple $\a_p$ in $\ph_2(G)$, to be able to compute in
  constant time the smallest tuple $\a'_p$ in $\ph_2(G)$ that is bigger than $\a_p$ and that satisfies the condition: $G\models \dist_{>2r}(\c,\a'_p)$.

  An adequate technique to solve this issue has been used several times in previously published results. The firsts appearances where in~\cite[Section 3.5]{DBLP:conf/pods/DurandSS14}
  and~\cite[Section 4]{DBLP:conf/pods/KazanaS13}. Some variations have been used again in~\cite[Lemma 10.1]{DBLP:journals/tods/BerkholzKS18} and~\cite[Lemma 3.10]{DBLP:conf/pods/SchweikardtSV18}. We now state our version, 
  that is particularly suited to deal with tuples of elements.

  \begin{lem}\label{lem-skip}
    Let $\C$ be a class of graphs with low degree. Let $G$ be a graph in $\C$, $n:=|G|$, $\e>0$ and $L$ be a list of $k$-ary, $r$-centered tuples of elements of $G$ and let $k\in\NNpos$. There is a data structure computable in time $O(|L|\cdot n^{\e})$ and recomputable in time $O(n^\e)$ (when a tuple is added or removed from $L$) allowing us, given a tuple $\b$ in $L$ and a set $I$ of at most $k$ elements of $G$, to compute in constant time the next tuple $\b'$ in $L$ satisfying $\dist_{>r}(\b',a)$ for all $a$ in $I$.
  \end{lem}
  The idea is to compute a function $\sk(\b,I)$ that does what we want. The difficulty is that this function is way too big to be computed entirely. Fortunately only computing a small part of it (of pseudo linear size) is enough for our needs.
The proof of Lemma~\ref{lem-skip} is written in Appendix~\ref{app-skip}.
\end{proof}

\subsection{The testing problem}\label{subseq-low-test}
The goal of this section is to prove the testing part of Theorem~\ref{thm-low-all}.

\begin{proof}
  Let $G$ be a graph from a class $\C$ of low degree, let $n:=|G|$, and let $\ph(\x)$ be an $\FO$ query. Thanks to Corollary~\ref{cor-fo-queries}, we know that there is an $r$ 
  exponential in $|\ph|$ such that $\ph(\x)$ is equivalent to a boolean combination of $\FO$ sentences and $\FO$ queries of
  form~\ref{eq-nf-k}.

  Thanks to Theorem~\ref{thm-low-mc}, for every $\FO$ sentence of the decomposition, there is a data structure allowing us to test in constant time whether each sentence holds in the graph. Thanks to Lemma~\ref{lem-low-unary}, similar structures exist for any $r$-local centered queries $\alpha_i(\x_i)$ that appear in the decomposition. Moreover, the distance type query $\Delta_{2r,\tau}(\x)$ is a boolean combination of $r$-local centered queries. Hence we can also use the data structure provided by Lemma~\ref{lem-low-unary} for those queries.

  All of these structures can be computed in time $O(n^{1+\e})$ and updated in
  time $O(n^\e)$. Furthermore they are only a constant number of such queries,
  hence the overall complexity is exactly what we need.
  Finally, when a tuple is given, we can use those data structures to test in
  constant time which queries and sentences hold in the graph.
  Given that information, we can compute the boolean combination and answer the
  testing problem for $\ph(\x)$ over $G$ in constant time (depending only on $\C$ and $\ph$).
\end{proof}

\subsection{The counting problem}\label{subseq-low-count}
Here we prove the counting part of Theorem~\ref{thm-low-all}.

Similarly to the enumeration part, and thanks to Corollary~\ref{cor-fo-queries} we only provide an algorithm for queries of the form $\ph(\x) = \alpha_1(\x_1)\wedge \ldots \wedge \alpha_p(\x_p) \wedge \Delta_{2r,\tau} (\x)$.

The proof goes by induction on $p$, which is the number of connected components in the \linebreak distance-type~$\tau$. More precisely, we show that the problem of counting queries with several centered components can be reduced to the problem of counting local centered queries. Counting local centered queries is solved by Lemma~\ref{lem-low-unary}.

\begin{lem}\label{lemma-count-decomposition}
  Let $\ph(\x)$ be an $r$-local query of the form $\alpha_1(\x_1)\wedge \ldots\wedge \alpha_p(\x_p) \wedge \Delta_{2r,\tau} (\x)$ where all $\alpha_i$ are $r$-local centered queries.

  There are $r$-local centered queries $\ph_1,\ldots,\ph_m$ and an arithmetic combination $F$ such that for all graph $G$, we have $\#\ph(G)= F \big(\#\ph_1(G),\ldots,\#\ph_m(G)\big)$. Furthermore, $m = O(2^{pk})$.
\end{lem}

The proof of Lemma~\ref{lemma-count-decomposition} follows an inclusions-exclusion argument.

\begin{proof}[Proof of Lemma~\ref{lemma-count-decomposition}]
  Let $\ph(\x)$ be an $r$-local query of the form $\alpha_1(\x_1)\wedge \ldots\wedge \alpha_p(\x_p) \wedge \Delta_{2r,\tau}(\x)$.
  If $p=1$, we are done. If $p>1$. Let $\w = (\x_2,\ldots,\x_p)$. Consider the following three queries:
  \[\begin{array}{r l}
    \ph_1(\x_1)     &:= \alpha_1(\x_1), \\
    \ph_2 (\w)      &:=  \alpha_2(\x_2) \land \ldots \land \alpha_p(\x_p) \land \Delta_{2r,\tau'}(\w), \text{where } \tau' \text{ is the restriction of } \tau \text{ to } \w \\
    \ph_3(\x_1,\w)  &:=\ph_1(\x_1)\wedge \ph_2(\w)\wedge \dst{2r}(\x_1;\w).
  \end{array}\]

\noindent For all $(\a,\b)\in G^k$, we have that: $G\models \ph(\a\b)$\quad if and only if $G\models\ph_1(\a) \wedge \ph_2(\b) \wedge \dist_{>2r}(\a;\b)$,\\
hence \quad $\ph(G) = \ph_1(G)\bowtie \ph_2(G) \setminus \set{\a,\b \in G ~|~ \ph_1(\a)\wedge \ph_2(\b)\wedge \dst{2r}(\a,\b)}$,\\
which gives \quad $\ph(G) = \ph_1(G)\bowtie \ph_2(G) \setminus \ph_3(G)$.\\
And since $\ph_3(G) \subseteq \ph_1(G)\bowtie \ph_2(G)$, 
it follows that $\#\ph(G) = \# \ph_1(G) \cdot \# \ph_2(G) - \#\ph_3(G)$.

  It is easy to see that both $\ph_1$ and $\ph_2$ have strictly less than $p$ connected components in their distance type (respectively $1$ and $p-1$) and are $r$-locals. Therefore, the induction hypothesis holds for $\ph_1$ and $\ph_2$. I.e.~$\#\ph_1(G)$ and $\#\ph_2(G)$ are arithmetic combinations of number of solutions for $r$-local centered queries.

  We now turn our attention to $\ph_3$. We say that $(\x'_1;\ldots;\x'_{p'}) \ll (\x_1;\ldots;\x_p)$ if and only if:
  \begin{itemize}
  \item $(\x'_1;\ldots;\x'_{p'})$ is a partition of $\x$ with $p' < p$,
  \item $\x_1 \subsetneq \x'_1$,
  \item $\forall 1 < j \le p',$ there is a $i>1$ such that $\x'_j = \x_i$.
  \end{itemize}

  Basically, $\x'_1$ is the collapse of $\x_1$ and at least one of the $\x_i$. The other $\x_i$ remain unaltered. Given ${\cal I} :=(\x'_1;\ldots;\x'_{p'})$, for every $j\le p'$, we define:  $\alpha'_j(\x'_j)= \bigwedge\limits_{\setc{i \le p}{\x_i \subseteq \x'_j}} \alpha_i(\x_i)$.

  It follows from those definitions that:
  $$\ph_3(\x) = \bigvee\limits_{(\x'_1;\ldots;\x'_{p'})\ll (\x_1;\ldots;\x_{p})} \alpha'_1(\x_1) \wedge \ldots \wedge \alpha'_{p'}(\x'_{p'}) \wedge \Delta_{2r,\tau'}(\x).$$
  Moreover these disjunctions are strict, hence:
  $$\# \ph_3(G) = \sum\limits_{(\x'_1;\ldots;\x'_{p'})\ll (\x_1;\ldots;\x_{p})} \# \big( \alpha'_1(\x_1) \wedge \ldots \wedge \alpha'_{p'}(\x'_{p'}) \wedge \Delta_{2r,\tau'}(\x) \big).$$

  Since every query involved here has strictly less than $p$ connected components in its distance type, the induction hypothesis holds. There are at most $2^k$ queries involved. For each of them the induction hypothesis leads to $2^{k(p-1)}$ queries that are $r$-local centered such that computing the number of solutions for those queries give us the $\# \ph_3(G)$. Therefore the total number of queries built is $O(2^{pk})$.
\end{proof}

It is worth mentioning, that Lemma~\ref{lemma-count-decomposition} never mentions the degree of the structure and can be used in other contexts.

\section{Structure with bounded degree}\label{seq-bound}
This section we move to structures of bounded degree and our second contribution. 
\begin{thm}\label{thm-bound-mc}
  There is an algorithm that, upon input of integers $d,\ell$ and a graph $G$ of degree bounded by $d$ with $n:=|G|$, computes a data structure that:
  \begin{itemize}
    \item can be computed in time $n\cdot 2^{d^{2^{O(\ell)}}}$,
    \item can be recomputed in time $2^{d^{2^{O(\ell)}}}$ when $G$ is subject to a local update,
    \item enables us to test in time $2^{d^{2^{O(|\phi|+\ell)}}}$ whether $G\models \phi$ for any $\FO$ sentences $\phi$ of quantifier rank at most~$\ell$.
  \end{itemize}
\end{thm}

\begin{thm}\label{thm-bound-all}
  There is an algorithm that, upon input of integers $d,\ell$ and a graph $G$ of degree bounded by $d$ with $n:=|G|$, computes a data structure that:
  \begin{itemize}
    \item can be computed in time $n\cdot 2^{d^{2^{O(\ell)}}}$, and
    \item can be recomputed in time $ 2^{d^{2^{O(\ell)}}}$ when $G$ is subject to a local update.

    \item Furthermore, upon input of a $k$-ary query $\ph(\x)$ of quantifier rank $q$ with $k+q\le \ell $, after a further preprocessing in time $2^{d^{2^{O(|\phi|+\ell)}}}$, the data structure enables us:
    \begin{enumerate}
      \item~{\textit{\small (enum:)}} to enumerate with delay $O(k^3)$ the set $\ph(G)$,
      \item~{\textit{\small (test:)}} upon input of a $k$-tuple $\a$, to test in time $O(k^2)$ whether $G\models \ph(\a)$, and
      \item~{\textit{\small (count:)}} to compute in time $O(1)$ the number $|\ph(G)|$ of solutions for $\ph$ over $G$.
    \end{enumerate}
  \end{itemize}
\end{thm}

\nc{\iso}{\simeq}

\subsection{Basic data structure}\label{subseq-bound-basic}
Similarly to Section~\ref{seq-low-degree} we start with centered queries. We describe what we call the basic data structure ($BDS$).
Informally, the basic data structure contains all possible graphs of bounded degree and bounded radius and some mappings 
between all these small graphs and our larger structure.

More formally, the data structure is defined for two integers $d,\ell$, and a graph $G=(V,E)$ of degree at most $d$.
The basic data structure is composed of the following elements:
\begin{itemize}
  \item The set $\H$ of every (up to isomorphism) graph $H$ of degree bounded by $d$, one marked node $c^H$,
  and radius at most $r\ell$ around $c^H$,
  where $r=2^{O(\ell)}$.
  \item A function $\X(\cdot)$ that associates each $a\in V$ to the only $H\in\H$ such that there is an isomorphism from $\N_{r\ell}^G(a)$ to $H$ that maps $a$ to $c^{H}$.
  \item For all $H\in\H$, the list $L_{H}$ of all $a\in V$ such that 
  $\X(a)=H$.
  \item For all $H\in\H$, the number $\#L_{H}$ that is the size of $L_H$.
  \item For every $a$ in $V$, we fix an isomorphism $f_a(\cdot)$ from $\N_{r\ell}^G(a)$ to $H$ (that maps $a$ to $c^{H}$).
\end{itemize}
This ends the description of the basic data structure $BDS(d,\ell,G)$.

Note that an element $a$ in $G$ appears in only one of the $L_H$.

\begin{lem}\label{lem-bound-BDS}
  Let $d,\ell$ in $\NNpos$ and a graph $G$ of degree bounded by $d$. We can:
    \begin{itemize}
    \item compute $BDS(d,\ell,G)$ in time $O(|G|)$, and
    \item update $BDS(d,\ell,G)$ into $BDS(d,\ell,G')$ when a local update changes $G$ into $G'$ in time $O(1)$.
  \end{itemize}
\end{lem}

Here again, the $O(\cdot)$ hide constant factors that depend on $d$ and $\ell$.
Similarly to other proofs of this paper, we do not prove the first item of Lemma~\ref{lem-bound-BDS} as it is enough to apply the update procedure from the empty graph to the desired one by adding all vertices and all edges one by one.

\begin{proof}[Proof of Item 2 of Lemma~\ref{lem-bound-BDS}]
  Let $d,\ell,G$ and $BDS(d,\ell,G)$. Let $G'$ be the update of $G$, let $r=2^{O(\ell)}$ given by Corollary~\ref{cor-fo-queries} and let $L_u$ be the list of $r\ell$-impacted nodes. 

  For all $a$ in $L_u$, we perform the following procedure:
  \begin{itemize}
    \item Let $H= \X(a)$.
    \item Remove $a$ from $L_H$ and put $\#L_H := \#L_H -1$.
    \item Compute $\N_{r\ell}^{G'}(a)$ by performing a simple breadth-first search in $G'$ starting from $a$.
    \item Scan through $\H$ to find $H'$ such that $\N_{r\ell}^{G'}(a)\iso H'$ (with $a$ mapped to $c^{H'}$).
    \item Set $\X(a) := H'$.
    \item Compute an isomorphism $f_a(\cdot)$ from $\N_{r\ell}^{G'}(a)$ to $H'$ (with $a$ mapped to $c^{H'}$).
    \item Set $\#L_{H'} \gets \#L_{H'}+1$ and add $a$ into $L_{H'}$.
  \end{itemize}

  This computes what we need. Let us now look closely at the complexity of this algorithm. By Lemma~\ref{lem-update} $L_u$ has size $O(d^{\,r\ell})$ and can be computed in time $O(d^{\,r\ell})$. For each $a$ in $L_u$, step (1), (2), (5) and (7) can be performed in time $O(1)$ with our computational model. Step (3) can be performed in time $O(d^{r\ell})$. Step (6) can be performed in time $O(2^{d^{r\ell}})=2^{d^{2^{O(\ell)}}}$. For step (4), the number of elements in $\H$ is bounded by $O(2^{(d^{r\ell})^2})$. For each we perform a test similar to the one in Step (6), leading again to a total complexity of $2^{d^{2^{O(\ell)}}}$.
\end{proof}
We now turn to the application of the Basic Data structure.

\subsection{The model checking problem}\label{subseq-bound-mc}

In this section we prove Theorem~\ref{thm-bound-mc}. The data structure is almost the one defined in Section~\ref{subseq-bound-basic}.

Let $d,\ell$ be two integers, and $G$ a graph of degree $d$. Assume that $BDS(G,d,\ell)$ has been computed. We are now given an $FO$ sentence $\phi$ of quantifier rank at most $\ell$. By Theorem~\ref{thm-nf-mc}, $\phi$ is equivalent to a boolean combination of sentences of the form:
$$\exists x_1\ldots x_s ~\Big( \underset{1\le i \le j \le s}{\bigwedge}~\dist_{>2r}(x_i,x_j) \wedge \underset{1\le i\le s}{\bigwedge} \alpha(x_i) \Big)$$
where $\alpha$ is an $r$ local query and $r= 2^{O(\ell)}$. We show how to test whether $G\models\ph$ for sentences of the form above.

\begin{itemize}
  \item The first thing we do is to compute $\#\alpha(G)$ that is the number of elements of $G$ satisfying $\alpha$. Note that $G\models \alpha(a)$ if and only if $\N_{r\ell}^G(a) \models\alpha(a)$ and therefore if and only if $H\models \alpha(c^H)$, where $H=\X(a)$ i.e.~the graph isomorphic to $\N_{r\ell}^G(a)$.

  Hence we have that $\#\alpha(G) = \sum\limits_{\setc{H \in \H}{H\models \alpha(c^H)}} \#L_H$. Thus we test for every $H$ in $\H$ whether $H\models \alpha(c^H)$, and if so, we add $\#L_H$ to the total.

  \item Once $\#\alpha(G)$ has been computed we consider two cases:
  \begin{itemize}
    \item If $\#\alpha(G)> k\cdot d^{2r}$, there must be $k$ elements satisfying $\alpha(x)$ at distance greater than $2r$ from each other. We can therefore assert that $G\models \ph$.
    \item If $\#\alpha(G) \le k\cdot d^{2r}$, we compute the list $\alpha(G)$ of all elements satisfying $\alpha(x)$. Similarly to the first phase, we have that $\alpha(G) = \bigcup\limits_{\setc{H \in \H}{H\models \alpha(c^H)}} L_H$.

    We then define $\bm\tilde{G}:= \bigcup\limits_{a \in \alpha(G)} N_r^G(a)$. We have that $G \models \ph$ if and only if $\bm\tilde{G}\models \ph$. We then use any naive algorithm to test whether $\bm\tilde{G}\models \ph$.
  \end{itemize}
\end{itemize}

This completes the answering phase. Let us now look at the complexity of this procedure. The number of graphs in $\H$ and their sizes is a function of $d$ and $\ell$ and is independent of $|G|$. Therefore computing $\#\alpha(G)$ only took time $O(1)$.

More precisely, there are $O\left(2^{(d^{r\ell})^2}\right)$ elements in $\H$ each of size $O(d^{r\ell})$.
Testing whether a given $H$ models $\alpha(c^H)$ can therefore be performed in time $O\left(2^{d^{r\ell}}\right)$. The overall complexity of the first phase is therefore $2^{d^{2^{O(\ell)}}}$.

During the second phase, the first case is just a test that can be performed in time $O(1)$. In the second case, we have that $|\bm\tilde{G}|\le \#\alpha(G)\cdot d^r$. Since $\#\alpha(G) \le k\cdot d^{2r}$, we have that $|\bm\tilde{G}|\le k\cdot d^{3r}$. We then check whether $\bm\tilde{G}\models \ph$ in time $O\big(|\bm\tilde{G}|^{|\ph|}\big)$.

Hence the total time of the second phase $O\big( (k\cdot d^{3r})^{|\phi|} \big) = 2^{d^{2^{O(|\phi|+\ell)}}}$

\subsection{The enumeration problem}\label{subseq-bound-enum}

This section is devoted to the enumerating part of Theorem~\ref{thm-bound-all}. First we show that the Basic Data Structure is enough if we allow the delay to depend on both the degree $d$ and the radius $r$. In Appendix~\ref{app-fast-enum}, we show that similarly to~\cite[Theorem 9.2]{DBLP:journals/tods/BerkholzKS18} we can make the delay be $O(k^3)$.

The first algorithm only uses the Basic Data Structure and works by induction on the number $k$ of free variables in $\ph$. The global proof is a combination of these two lemmas.

\begin{lem}\label{lem-bound-enum-one}
   For every fixed $G,d,\ell$, once $BDS(G,d,\ell)$ is computed, then for any unary \FO query $\ph(x)$ of quantifier rank $q$ such that $q < \ell$, without additional preprocessing, we can enumerate the set $\ph(G)$ with constant delay (depending on $d$ and $\ell$).
\end{lem}

\begin{lem}\label{lem-bound-enum-last}
   For every fixed $G,d,\ell$, once $BDS(G,d,\ell)$ is computed, for every $k$-ary \FO query $\ph(\x)$ and quantifier rank $q$ such that $k+q\le \ell$, without additional preprocessing and for every $k-1$ tuple $(a_1,\ldots a_{k-1})$, we can enumerate the set of all $b$ in $G$ such that $G\models \ph(a_1,\ldots,a_{k-1},b)$ with constant delay (depending only on $d$ and $\ell$).
\end{lem}

The following induction on $k$ explains why these two lemmas are enough. Fix $G,d,\ell$ and assume that $BDS(G,d,\ell)$ has been computed. We are now given an $\FO$ query $\ph(\x)$ of arity $k$ and quantifier rank $q$ such that $k+q\le \ell$. If $k=1$, Lemma~\ref{lem-bound-enum-one} explains how to enumerate the solutions. If $k>1$, we define $\ph'(x_1,\ldots,x_{k-1}) := \exists y \ph(x_1,\ldots,x_{k-1},y)$. Note that $\ph'$ has quantifier rank $q+1$ but arity $k-1$. Therefore by induction we can enumerate without further preprocessing the set $\ph'(G)$. For each tuple $(a_1,\ldots,a_{k-1})$ produced by this procedure, we apply Lemma~\ref{lem-bound-enum-last} to complete this tuple into a solution for $\ph$.

It is critical to see that every tuple in $\ph'(G)$ can be completed into a solution for $\ph$. Otherwise, we might encounter a lot of consecutive solutions for which the algorithm of Lemma~\ref{lem-bound-enum-last} would produce the empty set, making the delay not constant.

It only remains to prove the two lemmas. Here we only provide a sketch of those proofs, the complete ones are in the appendix, Section~\ref{app-simple-enum}.

\noindent{\it Proof sketch of Lemma~\ref{lem-bound-enum-one}.}
  Once $BDS(G,d,\ell)$ has been computed, the list $\ph(G)$ for a unary query $\ph$ is just a concatenation of lists $L_H$ from the Basic Data Structure.

\noindent{\it Proof sketch of Lemma~\ref{lem-bound-enum-last}.}
  The algorithm distinguishes two cases. If $b$ must be at distance less than $2r$ to one of the $a\in\a$, there is only a constant number of possible $b$. If $b$ must be at distance greater than $2r$ from $\a$, then all $b$ that satisfies some local condition (not depending on $\a$) are solution except those that are close to $\a$. Since they are only a constant number of $b$ close to $\a$, even if we encounter all of them in a row, the delay remains bounded by a constant (depending on $d$ and $r$).

\subsection{The testing problem}\label{subseq-bound-test}
In this section we prove the testing part of Theorem~\ref{thm-bound-all}. Similarly to Section~\ref{subseq-bound-mc}, we show that the basic data structure is enough for completing this task.

\paragraph*{Naive algorithm}
Let $\ph(\x)$ be an $FO$ query of arity $k$ and quantifier rank $q$ with $k+q\le \ell$.

By Corollary~\ref{cor-fo-queries}, $\ph$ is equivalent to a boolean combination of sentences of the form of Equation~\ref{eq-nf-mc} and $r$-local queries where $r=2^{O(\ell)}$.

Since the sentences can be evaluated using Theorem~\ref{thm-bound-mc}, we are only left with $r$-local queries. It turns out that over structures with bounded degree, local queries can always be tested in constant time because for every $r$-local queries $\ph(\x)$ and every tuple $\a$ in $G$:
$$ G \models \ph(\a) \quad \text{ if and only if } \quad \N_r^G(\a) \models \ph(\a)$$

We can compute $\N_r^G(\a)$ in constant time and then test whether any $r$-local query holds in constant time (with constant factors depending on $r$ and $\ell$).

\paragraph*{Faster algorithm}
In~\cite[Theorem 6.1]{DBLP:journals/tods/BerkholzKS18} it is shown, that the testing procedure can be performed in time $O(k^2)$ where $k$ is the arity of the query. One can split the answering phase in two, the query is given first and the tuple latter. We can enrich the data structure to match the complexity of~\cite{DBLP:journals/tods/BerkholzKS18}.

\begin{lem}\label{fast-test}
  For every fixed integers $d$, $\ell$ in $\NNpos$ there is an algorithm that,
  upon input of a graph $G$ of degree bounded by $d$, computes a data structure that:
  \begin{itemize}
    \item can be computed in time $|G|\cdot 2^{d^{2^{O(\ell)}}}$, and
    \item can be recomputed in time $2^{d^{2^{O(\ell)}}}$ when $G$ is subject to a local update.

    Furthermore, upon input of a $k$-ary query $\ph(\x)$ of quantifier rank $q$ with $k+q\le \ell$, the data structure allows to perform a final task in time triply exponential in $\ph$ such that afterward upon input of a $k$-tuple $\a$, to test in time $O(k^2)$ whether $G\models \ph(\a)$
  \end{itemize}
\end{lem}

The proof of Lemma~\ref{fast-test} is in Appendix~\ref{app-fast-test}.

\subsection{The counting problem}\label{subseq-bound-count}
Here we prove the counting part of Theorem~\ref{thm-bound-all}. Again we show that the basic data structure is enough for completing this task.
Let $(G,d,\ell)$ fixed. Let $r = 2^{O(\ell)}$, and assume that $BDS(G,d,\ell)$ has been computed.

Let $\ph(\x)$ be an $FO$ query of arity $k$ and quantifier rank $q$ with $k+q\le \ell$.

By Corollary~\ref{cor-fo-queries}, $\ph$ is equivalent to a boolean combination of sentences of the form of Equation~\ref{eq-nf-mc} and $r$-local queries.
Furthermore, once all sentences have been evaluated, we are left with a strict disjunction of queries of the form:
$ \alpha_1(\x_1)\wedge \ldots\wedge \alpha_p(\x_p) \wedge \Delta_{2r,\tau}(\x)$ where $\alpha_i(\x_i)$ are $r$-local centered queries.
We therefore only prove how to count the solutions for a query of the form above, the total number can be obtained by summing the different results.

In the upcoming Lemma~\ref{a}, we show that the Basic Data Structure is enough when $p=1$. The general case is then solved by Lemma~\ref{lemma-count-decomposition} that state that when $p > 1$ the number of solutions is an arithmetical combination of the number of solutions for at most $2^{pk}$ new $r_q$-local centered queries (centered meaning that $p=1$).

\begin{lem}~\label{a}
  For every fixed $(G,d,\ell)$, once $BDS(G,d,\ell)$ is computed. For any $r$-local centered query $\alpha(\x)$ with $k$ free variables and quantifier rank $q$ (with $k+q\le \ell$), the number $\#\ph(G)$ can be computed in time $2^{d^{2^{O(\ell)}}}$.
\end{lem}

\begin{proof}
  Let $x_1$ be the first element of $\x$. For all tuple $\a$, we have that $G \models \alpha(\a)$ if and only if $H \models \alpha(f_{a_1}(\a))$. Where $H=\X(a_1)$ is the graph isomorphic to $\N^G_{r_\ell}(a_1)$ and $f_{a_1}(\cdot)$ is the isomorphism from $\N^G_r(a_1)$ to $H$.
  Remember that given a graph $H$ in $\H$, we have access to the list $L_H$ of all $a$ in $G$ such that $\X(a)=H$ and to $\#L_H$, the size of this list.

  For all $H$ in $\H$ we compute $\#\alpha(H,c^H)$ that is the number of $k$-tuple $\a$ in $H$ with $a_1=c^H$ such that $H\models \alpha(\a)$. We obtain that:
  $\#\alpha(G) = \sum\limits_{H\in\H} \big( \#\alpha(H,c^H) \cdot \#L_H \big).$

  This ends the proof of Lemma~\ref{a}. The time needed for this final phase is triply exponential in $\ell$ because there are only $2^{d^{O(r\ell)}}$ graphs in $\H$, and for a fixed $H$, the number $\#\ph(H,c^H)$ can be computed in time $2^{d^{O(r\ell)}}$.
\end{proof}

\section{Conclusion}
We proved that over structures with bounded or low degree and for every first-order query, we can efficiently compute and recompute a data structure allowing to answer in constant time (or with constant delay) the model-checking, testing, counting and enumeration problems. It was already known for classes of graphs with bounded degree~\cite{DBLP:journals/tods/BerkholzKS18} but not for classes of graph with low degree.

In the case of structure with bounded degree we showed that, without increasing the preprocessing time,	we can compute a data structure that does not depend on the query but only on its quantifier rank. We can therefore compute a single preprocessing that can later be used for a variety of queries.

\bibliographystyle{plain}
\bibliography{biblio}

\begin{appendix}
  \section{Appendix for Section~\ref{seq-low-degree}}

  \subsection[The enumeration problem]{Complement for the enumeration: Skip elements of a list}\label{app-skip}
  In this section we prove Lemma~\ref{lem-skip}, reformulated as follow:

  \begin{lem}[Skip pointers]\label{lemma-skip-pointer}
    For every graph $G$ with $n:=|G|$, every $r,k$ and $m$ in $\NN$, and every list of $r$-centered tuples $L\subseteq V^m$, there is a data structure, computable in time $O\big(|L|\cdot d(G)^{kr}\big)$ and recomputable in time $O(d(G)^{(krm)^2})$ allowing us, when given a tuple $\b$ in $L$ and a set $I$ of at most $k$ elements of $G$, to compute in constant time the tuple $\sk(\b,I)$ that is the next tuple $\b'$ of $L$ that satisfies $\b' \cap N^G_r(I) = \emptyset$.
  \end{lem}

  Note that for classes of graphs with low degree, the complexity above become $O(|L|\cdot n^\e)$ for the preprocessing and $O(n^\e)$ for the update. Similarly, for classes graphs with bounded degree, it becomes $O(|L|)$ and $O(1)$ (with constants factors at most doubly exponential in $r$).
  \begin{proof}
    From now on we fix $r,k,m$, $G$, and $L$ as in the statement of the lemma and let $d$ be the degree of $G$.

    The domain of the $\sk(\cdot,\cdot)$-function is too big so we cannot compute it during the preprocessing phase. Fortunately, computing only a small part of it will be good enough for our needs. For each tuple $\b$ of $L$, we define by induction a set $\SC(\b)$ of sets of at most $k$ elements. We start with $\SC(\b)=\emptyset$ and then proceed as follows.
    \begin{itemize}
      \item For all tuple $\b$ of $L$ and for all element $a$ in $G$ with $\dst{r}(a;\b)$, we add $\set{a}$ to $\SC(\b)$.
      \item For all nodes $\b$ of $L$, for all sets $I$ of elements from $G$, and all element $a$ of $G$, if $|I|<k$ and $I \in \SC(\b)$ and $\sk(\b,I) \cap  N^G_r(a) \neq \emptyset$, then we add $\set{I \cup \set{a}}$ to $\SC(\b)$.
    \end{itemize}

    In the preprocessing phase we will compute $\sk(\b,I)$ for all tuples $\b$ of $L$ and all sets $I\in\SC(\b)$. Before explaining how this can be accomplished within the desired time constraints, we first show that this is sufficient for deriving $\sk(\b,I)$ in constant time for	all tuples $\b$ and all sets $I$ consisting of at most $k$ elements of $G$.

    \begin{cla}\label{NE1}
      Given a tuple $\b$ of $L$, a set $I$ of at most $k$ elements of $G$, and $\sk(\c,J)$ for all tuples $\c$ that are after $\b$ in $L$ and all sets $J\in \SC(\c)$, we can compute $\sk(\b,I)$ in constant time.
    \end{cla}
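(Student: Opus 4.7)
The plan is to compute $\sk(\b,I)$ by a short iterative walk that starts at the immediate successor $\b_1$ of $\b$ in $L$ and skips past tuples intersecting $N_r^G(I)$, using only lookups into the table associated with $\b_1$. I would maintain a pair $(J,\c)$ initialized to $(\emptyset,\b_1)$ and iterate as follows. If no $a\in I\setminus J$ satisfies $\c\cap N_r^G(a)\neq\emptyset$, return $\c$. Otherwise, pick any such $a$, set $J\gets J\cup\{a\}$, and update $\c\gets\sk(\b_1,J)$ via a single table lookup. Since $J\subseteq I$ grows strictly at each step, the loop runs at most $|I|\le k$ times; each iteration performs $O(km)$ tests of the form $\dst{r}(a,u)$ for $a\in I$ and $u$ a component of $\c$, each of which is $O(1)$ once $r$-neighborhoods have been precomputed, plus one constant-time dictionary lookup.

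For correctness I would carry the invariant that (i)~$\c$ is a tuple of $L$ strictly greater than $\b$ with $\c\cap N_r^G(J)=\emptyset$, and (ii)~every tuple of $L$ strictly between $\b$ and $\c$ intersects $N_r^G(I)$. Initialization is immediate since $\c=\b_1$ is the successor of $\b$ and (ii) is vacuous. An update replaces $\c$ by $\c'=\sk(\b_1,J\cup\{a\})$, which by definition is the smallest tuple after $\b_1$ avoiding $N_r^G(J\cup\{a\})$; every tuple of $L$ in the open interval $(\c,\c')$ therefore intersects $N_r^G(J\cup\{a\})\subseteq N_r^G(I)$, and the discarded $\c$ itself intersects $N_r^G(a)\subseteq N_r^G(I)$, so (ii) is preserved. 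On exit, the failure to find a close $a\in I\setminus J$ upgrades $\c\cap N_r^G(J)=\emptyset$ into $\c\cap N_r^G(I)=\emptyset$, and the invariant then pins down $\c$ as $\sk(\b,I)$.

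The main obstacle is to justify that every lookup $\sk(\b_1,J)$ the algorithm issues was actually precomputed, i.e.~that $J\in\SC(\b_1)$ holds at the moment of the lookup. I would prove this by induction on the iteration count: the first time $J$ is set to $\{a\}$, the element $a$ was chosen with $\b_1\cap N_r^G(a)\neq\emptyset$, i.e.~$\dst{r}(a,\b_1)$, so the first rule in the construction of $\SC$ directly yields $\{a\}\in\SC(\b_1)$; each subsequent extension $J\gets J\cup\{a\}$ is licensed by the second rule, since at that point the previous $J$ lies in $\SC(\b_1)$ and the chosen $a$ satisfies exactly the hypothesis $\sk(\b_1,J)\cap N_r^G(a)\neq\emptyset$. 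The claim's hypothesis supplies $\sk(\c',J')$ for every $\c'$ strictly after $\b$ in $L$ and every $J'\in\SC(\c')$, and in particular for $\c'=\b_1$, so every lookup the walk requests is available and $\sk(\b,I)$ is returned in constant time.
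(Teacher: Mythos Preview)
Your argument is correct. The core idea coincides with the paper's: work at the immediate successor $\b'$ of $\b$ and exploit that the relevant subsets $J\subseteq I$ lie in $\SC(\b')$, so the needed skip pointers are precomputed. The execution differs. The paper picks in one shot a \emph{maximal} $J\subseteq I$ with $J\in\SC(\b')$, makes a single lookup $\sk(\b',J)$, and proves $\sk(\b',J)=\sk(\b,I)$ by contradiction on maximality (if some $a\in I$ were still close, $J\cup\{a\}$ would also be in $\SC(\b')$). You instead grow $J$ one element at a time, performing up to $k$ lookups $\sk(\b',J)$ and carrying an explicit invariant. Your version has the pleasant feature that it never needs to test membership in $\SC(\b')$; the incremental construction follows the two defining rules of $\SC$ verbatim, so every lookup is guaranteed to hit the table. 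The paper's version is terser and uses a single lookup but implicitly assumes one can scan the $\le 2^{k}$ subsets of $I$ for membership in $\SC(\b')$. Both are $O(1)$ in $n$. Two minor omissions worth patching: handle the boundary case where $\b$ is the last tuple of $L$ (return $\nul$), and allow your walk to terminate with $\nul$ when a lookup $\sk(\b',J)$ itself returns $\nul$.
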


    \begin{proof}
      If $\b$ is the last tuple of $L$ then either $\sk(\b,I)=\b$ or $\sk(\b,I)=\nul$. Otherwise let $\b'$ be the tuple following $\b$ in $L$. We consider two cases (testing in which case we fall can be done in constant time).\\
      \noindent\emph{Case~$1$:} $\b'\cap N^G_r(I)=\emptyset$. In this case, $\b'$ is $\sk(\b,I)$ and we are done.

      \noindent\emph{Case~$2$:} $\b'\cap N^G_r(I)\neq\emptyset$. So there is an $a\in I$ such that $\b'\cap N^G_r(a)\neq\emptyset$ Let $J$ be a maximal (w.r.t.\ inclusion) subset of $I$ in $\SC(\b')$. Since $\set{a} \in \SC(\b')$, we know that $J$ is non-empty.

   		We claim that $\sk(\b',J)=\sk(\b,I)$. To prove this, let us first assume for contradiction that $\sk(\b',J) \cap N_r^G(a)\neq \emptyset$ for some $a	\in I$. By definition, this implies that $a$ is not in $J$. Hence $|J|<|I| \le k$. Thus, by definition of $\SC(\b')$ we have $J\cup\set{a} \in \SC(\b')$ and $J$ was not maximal.

      Moreover, by definition of $\sk(\b',J)$, every tuples in $L$ between $\b'$ and $\sk(\b',J)$ intersect $N^G_r(a')$ for some $a'$ in $J$ (and therefore $a'$ in $I$). The claim follows.
    \end{proof}

    We now show that $\SC(\b)$ is small for all tuple $\b$ of $L$ and that we can compute efficiently $\sk(\b,I)$ for all tuple $\b$ and all sets $I\in \SC(\b)$.
    \begin{cla}\label{NE2}
      For each tuple $\b$ of $L$, $|\SC(\b)|$ has size $O\big((m\cdot d(G)^r)^k\big)$. Moreover, it is possible to compute $\sk(\b,I)$ for all tuples $\b$ of $L$ and all sets $I\in \SC(\b)$ in time $O\big(|L|\cdot (m\cdot d(G)^r)^k\big)$.
    \end{cla}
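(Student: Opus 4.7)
The plan is to split Claim~\ref{NE2} into its two assertions: a combinatorial bound on $|\SC(\b)|$ and an algorithmic bound on the total work. Both will follow from a stratification of $\SC(\b)$ by cardinality, together with an ordering of the work that lets us invoke Claim~\ref{NE1} in constant time.

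For the size bound, I would show by induction on $i\in\{1,\dots,k\}$ that $\SC(\b)$ contains at most $(m\cdot d(G)^r)^i$ sets of size $i$. The base case is immediate from the construction: a singleton $\{a\}$ lies in $\SC(\b)$ only if $a\in N_r^G(\b)$, and since $\b$ has $m$ coordinates, $|N_r^G(\b)|\le m\cdot d(G)^r$. For the inductive step, any set of size $i+1$ in $\SC(\b)$ has the form $I\cup\{a\}$ with $I\in\SC(\b)$ of size $i$ and $a\in N_r^G(\sk(\b,I))$; the latter set has at most $m\cdot d(G)^r$ elements because $\sk(\b,I)$ is itself a tuple of arity $m$. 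Summing the geometric series $\sum_{i=1}^{k}(m\cdot d(G)^r)^i$ gives the claimed $O\bigl((m\cdot d(G)^r)^k\bigr)$ bound.

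For the algorithmic part, I would process the tuples of $L$ from last to first. By construction every query $\sk(\b,I)$ that shows up in the definition of $\SC(\b)$ refers to a tuple strictly after $\b$ in $L$; so when we treat $\b$, the values $\sk(\c,J)$ for all later $\c$ and all $J\in\SC(\c)$ are already tabulated. For a fixed $\b$, I would build $\SC(\b)$ in breadth-first fashion: first insert all singletons $\{a\}$ obtained by a distance-$r$ search from the $m$ coordinates of $\b$, then, whenever a set $I$ of size $<k$ is popped from the queue, compute $\sk(\b,I)$ in $O(1)$ via Claim~\ref{NE1}, scan the at most $m\cdot d(G)^r$ elements of $N_r^G(\sk(\b,I))$, and push each $I\cup\{a\}$ (if not already present) into both $\SC(\b)$ and the queue. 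The RAM model with lazy array initialisation lets us index $\SC(\b)$ by its contents so that membership tests and insertions cost $O(1)$. The work done at each set $I\in\SC(\b)$ is therefore $O(m\cdot d(G)^r)$, and summing over $\SC(\b)$ gives $\sum_{i=1}^{k-1}(m\cdot d(G)^r)^i\cdot(m\cdot d(G)^r)=O\bigl((m\cdot d(G)^r)^k\bigr)$ per tuple, hence $O\bigl(|L|\cdot(m\cdot d(G)^r)^k\bigr)$ overall.

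The main obstacle is bookkeeping: I must make sure every set is created only once and that the BFS expansion does not pay extra for duplicate candidates, since otherwise the work would blow up by an extra factor of $m\cdot d(G)^r$. This is the reason for storing $\SC(\b)$ as an indexed structure and pushing a new set $I\cup\{a\}$ only after an $O(1)$ membership test. A secondary subtlety is simply checking that the reverse-order induction is well founded, which amounts to noting that $\sk(\b,I)$ is by definition a tuple strictly later than $\b$ in $L$, so the tabulated values requested by Claim~\ref{NE1} are always available when needed.
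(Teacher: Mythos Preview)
Your proposal is correct and follows essentially the same approach as the paper: stratify $\SC(\b)$ by cardinality to get the size bound, then process the tuples of $L$ from last to first and invoke Claim~\ref{NE1} to obtain each $\sk(\b,I)$ in constant time. Your write-up is in fact more detailed than the paper's, which does not spell out how $\SC(\b)$ itself is built; your breadth-first construction with $O(1)$ membership tests is a reasonable way to fill that gap.

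One small slip worth fixing: you justify the well-foundedness of the reverse induction by saying that ``$\sk(\b,I)$ is by definition a tuple strictly later than $\b$ in $L$''. That is not quite true: $\sk(\b,I)$ may equal $\b$ itself (this is even used in the base case of Claim~\ref{NE1}). The correct justification is simply that Claim~\ref{NE1} only requires access to $\sk(\c,J)$ for tuples $\c$ strictly \emph{after} $\b$ (its proof jumps to the successor $\b'$ and looks up $\sk(\b',J)$), so the right-to-left order suffices regardless of where $\sk(\b,I)$ lands.
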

    Remember that $m$ is the length of the tuples that are in $L$.

    \begin{proof}
      We start by proving the first statement, and afterwards we use Claim~\ref{NE1} to show that we can compute these pointers inductively.

      By $\SC_p(\b)$ we denote the subset of $\SC(\b)$ of sets $I$ with $|I| \le p$. We know that $|\SC_1(\b)| \leq m\cdot d(G)^r$ for all tuples $\b$ of $L$. For the same reason, we have that $|\SC_{p+1}(\b)|$ is of size at most $O(m\cdot d(G)^r\cdot |\SC_p(\b)|)$. Therefore, for all $\b\in L$, we have:
      $$|\SC(\b)| \ \ = \ \ |\SC_k(\b)| \ \ \leq \ \ O\big((m\cdot d(G)^r)^k\big).$$

      We compute the pointers for $\b$ from $\b_\textit{last}$ to $\b_\textit{first}$, where $\b_\textit{last}$ and $\b_{\textit{first}}$ are respectively the last and first tuples of $L$. Given a tuple $\b$ in $L$, assume we have computed $\sk(\c,J)$ for all $\c$ after $\b$ in $L$ for all $J\in \SC(\c)$. We then compute $\sk(\b,I)$ for $I \in \SC(\b)$ using Claim~\ref{NE1}.

      At each step the pointer is computed in constant time. Since there are $O\big(|L|(kd(G))^{kr}\big)$ of them, the time required to compute them is as desired.
    \end{proof}

    So far we have not use the fact that $L$ only contains $r$-centered tuples. It is indeed only use for the update procedure. The key observation is that $\sk(\b,I)$ cannot be arbitrary far from $\b$ in the list.
    \begin{cla}
      When a tuple $\b$ is added to or removed from the list $L$, for every tuple $\c$ in $L$ that is not among the $\big(k\cdot d(G)^{3mr}\big)^m$ tuples that directly precede (or was preceding) $\b$, we have that: $\SC(\c)$ and $\sk(\c,I)$ for every $I\in\SC(\c)$ is unaffected by the addition or deletion of $\b$.
    \end{cla}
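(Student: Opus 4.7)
The key quantitative fact is that for every $\c\in L$ and every set $I$ with $|I|\le k$, the pointer $\sk(\c,I)$ cannot sit arbitrarily far after $\c$ in $L$: by definition every tuple strictly between $\c$ and $\sk(\c,I)$ in $L$ meets $N^G_r(I)$, so it suffices to bound the number of $r$-centered $m$-tuples that can intersect this small set. If $\c$ comes \emph{after} $\b$ in $L$, then $\sk(\c,I)$ and $\SC(\c)$ are trivially unaffected, since both depend only on the portion of $L$ from $\c$ onwards; the interesting case is therefore when $\c$ precedes $\b$.

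To get the count, the first step is to exploit the $r$-centered property. The $2r$-distance type of an $r$-centered $m$-tuple is connected on at most $m$ vertices, so any two of its coordinates are linked by a path of at most $m-1$ edges each of $G$-length at most $2r$; hence the diameter of the tuple in $G$ is at most $2(m-1)r$. Consequently any $r$-centered $m$-tuple whose coordinate lies in $N^G_r(I)$ has \emph{all} of its coordinates in $N^G_{(2m-1)r}(I)$, a set of at most $k\cdot d(G)^{(2m-1)r}$ vertices. Counting $m$-tuples over this set gives the bound $\bigl(k\cdot d(G)^{3mr}\bigr)^m$ (using $(2m-1)r\le 3mr$ for $m\ge 1$).

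Next, fix $\c\in L$ that precedes $\b$ but is not among the last $\bigl(k\cdot d(G)^{3mr}\bigr)^m$ tuples of $L$ before $\b$. For any $I$ with $|I|\le k$, every tuple of $L$ strictly between $\c$ and $\sk(\c,I)$ meets $N^G_r(I)$, so by the previous count there are at most $\bigl(k\cdot d(G)^{3mr}\bigr)^m$ of them; hence $\sk(\c,I)$ is a tuple of $L$ strictly preceding $\b$ (in particular it is not equal to $\b$ and cannot become $\nul$ upon deletion of $\b$). The portion of $L$ from $\c$ up to $\sk(\c,I)$ is then identical in both versions of $L$ (with and without $\b$), so $\sk(\c,I)$ is itself identical.

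Finally, invariance of $\SC(\c)$ will follow by induction on the size of elements of $\SC(\c)$: singletons $\{a\}$ are added based on $\dst{r}(a;\c)$, independently of $\b$; for the inductive step, $I\cup\{a\}$ is added iff $I\in\SC(\c)$ (inductively unchanged), $|I|<k$, and $\sk(\c,I)\cap N^G_r(a)\ne\emptyset$, a condition preserved by the preceding paragraph. The single nontrivial step is the diameter bound for $r$-centered tuples in the second paragraph; everything else is direct bookkeeping.
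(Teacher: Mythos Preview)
Your proposal is correct and follows the same approach as the paper: bound the number of $r$-centered $m$-tuples that can intersect $N^G_r(I)$ (via the diameter bound for $r$-centered tuples), and conclude that $\sk(\c,I)$ therefore lands before $\b$ whenever $\c$ is far enough back. The paper's argument is a three-sentence sketch of exactly this idea; your write-up simply fills in the bookkeeping (the case $\c$ after $\b$, and the induction showing $\SC(\c)$ is unchanged) that the paper leaves implicit.
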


    If a tuples or arity $m$ is $r$-centered then it is included in the $2mr$ neighborhood of any of it's element.
    Therefore, there are at most $\Delta = \big(kd(G)^{3mr}\big)^m$ possible $r$-centered tuples of arity $m$ that intersect the $r$ neighborhood of $I$.

    When we remove or add a tuples $\b$ in $L$, we only have to recompute the skip pointers of the $\Delta$ elements that are just before $\b$ in $L$. Which can be done in total time $O(d(G)^{(krm)^2})$.

    The combination of these three claims proves Lemma~\ref{lemma-skip-pointer}.
  \end{proof}

  \section{Appendix for Section~\ref{seq-bound}}
  \subsection{Simple enumeration algorithm}\label{app-simple-enum}
  In this section we formally prove Lemmas~\ref{lem-bound-enum-one} and~\ref{lem-bound-enum-last}.
  \begin{proof}[Proof of Lemma~\ref{lem-bound-enum-one}]
    Let $(G,d,\ell)$ fixed and assume that $BDS(G,d,\ell)$ has been computed. Let $\ph(x)$ be a unary $\FO$ query of quantifier rank $q$ such that $q < \ell$. By Corollary~\ref{cor-fo-queries}, $\ph$ is equivalent to a boolean combination of sentences of the form of Equation~\ref{eq-nf-mc} (for $r=r_q$) and $r_q$-local queries. We then do the following:
    \begin{itemize}
      \item First, for every sentence of the decomposition, test their truth value on $G$.
      \item Let $\alpha(x)$ be the boolean combination of the remaining $r_q$-local queries.
      \item For an element $a$ we have that $G\models \alpha(a)$ if and only if $\X(a) \models \alpha(c^H)$.\\
      Therefore we have that $\setc{a\in G}{G\models \alpha(a)} =  \bigcup\limits_{\setc{H\in\H}{H\models \alpha(c^H)}} L_H$.
      \item We select then all $H$ in $\H$ such that $H\models \alpha(c^H)$ and $\#L_H >0$.
      \item Finally, we run through the lists $L_H$ of all selected $H$ from the previous step.
    \end{itemize}
    This ends the proof of Lemma~\ref{lem-bound-enum-one}. For the complexity analysis, the firsts four items require a total time triply exponential in $|\ph|$. Once this is done the delay is truly $O(1)$ as we just read lists.
  \end{proof}

  \begin{proof}[Proof of Lemma~\ref{lem-bound-enum-last}]
    Let $(G,d,\ell)$ fixed and assume that $BDS(G,d,\ell)$ has been computed. Let $\ph(\x,y)$ be an $\FO$ query of quantifier rank $q$ and arity $k$ such that $k+q\le \ell$. By Corollary~\ref{cor-fo-queries}, $\ph$ is equivalent to a boolean combination of sentences of the form of Equation~\ref{eq-nf-mc} (for $r=r_q$) and $r_q$-local queries. We are then given a tuple $\a=(a_1,\ldots,a_{k-1})$ and we want to enumerate all $b$ such that $(\a,b)\in \ph(G)$.
    \begin{itemize}
      \item First, for every sentence of the decomposition, test their truth value on $G$.
      \item We are left with a strict disjunction of queries of the form \\$\alpha_1(\x_1)\wedge \ldots \wedge \alpha_p(\x_p,y) \wedge \tau_{2r_q} (\x_1;\ldots;\x_p)$ where all $\alpha_i(\x_i)$ are $r_q$-locals (we can assume without loss of generality that $y$ appears on $\alpha_p$). We enumerate the appropriate $b$ for each query from the disjunction one by one.
      \item We now have a tuple $\a= \a_1,\ldots,\a_p$ and a query $\alpha_1(\x_1)\wedge \ldots \wedge \alpha_p(\x_p,y) \wedge \tau_{2r_q} (\x_1;\ldots;(\x_p,y))$. We check whether $G\models\tau_{2r_q} (\x_1;\ldots;\x_{p-1})$ and $G\models\alpha_i(\a_i)$ for all $i<p$. If this is not the case, we can safely enumerate the empty set.
      \item It remains to enumerate all $b$ such that $G\models \alpha_p(\a_p,b)$ and $\dist(a_i,b)>2r_q$ for all $a_i \not\in \a_p$. We distinguish two cases whether $\a_p=\emptyset$.
      \begin{itemize}
        \item Case $1$, $\a_p\neq\emptyset$. Let $a_j$ be an element of $\a_p$. All matching $b$ must be in $\N^G_{kr_q}(a_j)$ and therefore $G\models (\alpha_p(\a_p,b))$ if and only if $\N^G_{r_\ell}(a_j) \models (\alpha_p(\a_p,b))$.
        We can therefore look at all the elements of $N^G_{2kr_q}(a_j)$ (there are only $d^{2kr_q}$), and for each test whether $G\models \alpha_p(\a_p,b)$ and $\dist(a_i,b)>2r_q$ for all $a_i \not\in \a_p$.

        We can produce the list of these elements in time $O(2^{d^{r_\ell}})$ and then enumerate the list.

        \item Case $2$, $\a_p=\emptyset$. In this case, we run through the list of all $b$ such that $G\models\alpha_p(b)$. For each such $b$, we test in constant time whether it is at distance greater than $2r_q$ to $\a$. If so we output it, otherwise we go to the next element of the list. We now explain how to run through all $b$ such that $G\models\alpha_p(b)$ and why this procedure as constant delay.

        To compute the list of all $b$ such that $G\models\alpha_p(b)$, it is enough to see that $G\models\alpha_p(b)$ if and only if $H\models \alpha_p(c^H)$ where $H= \X(a)$. Therefore we have that: $$\setc{b\in G}{G\models\alpha_p(b)} ~=~ \bigcup\limits_{\setc{H\in\H}{H\models\alpha_p(c^H)}}L_H$$
        We can test in time in constant time whether a graph $H$ satisfies the condition and there is only a constant number of them. Hence in constant time (triply exponential in $\ell$), we can run through all $b$ such that $G\models\alpha_p(b)$.

         For the constant delay, it is enough to see that testing whether $b$ is at distance greater than $2r_q$ from $\a$ can be performed in constant time, more precisely $O(kd^{2r_q})$. Furthermore at most $k\cdot d^{2r_q}$ element are at distance less than $2r_q$ to $\a$ and will not be enumerated. This makes the delay constant, i.e.~only triply exponential in $\ell$.
      \end{itemize}
    \end{itemize}
    This ends the proof of Lemma~\ref{lem-bound-enum-last}.
  \end{proof}

  \subsection{Fast enumeration algorithm}\label{app-fast-enum}
  In this section we explain how to improve the data structure for the enumeration part of Theorem~\ref{thm-bound-all}. Ideally, we would like the delay to be independent from the degree of the graph.

  \begin{thm}\label{fast-enum}
      Let $d,\ell$ in $\NNpos$ and a graph $G$ of degree bounded by $d$. There is a data structure that:
      \begin{itemize}
        \item can be computed in time $O(|G|)$,
        \item can be recomputed in constant time (triply exponential in $\ell$) when $G$ is subject to a local update, and
        \item upon a query $\ph(\x)$ of arity $k$ and quantifier rank $q$ with $k+q\le \ell$ perform a final task in constant time (triply exponential in $|\phi|+\ell$) such that afterward, the solution for $\ph(\x)$ over $G$ can be enumerated with delay $O(k^3)$.
      \end{itemize}
  \end{thm}

  The data structure for Theorem~\ref{fast-enum} combine ideas from different parts of previous algorithms.
  We first use the efficient testing procedure i.e.~Lemma~\ref{fast-test}, precisely the fact that there is a data structure allowing to test the distance up to some threshold in time $O(k^2)$.
  We also use the Lemma~\ref{lem-skip} that allows to skip some part of a list.
  This enable us to improve the two lemmas (Lemmas~\ref{lem-bound-enum-one} and~\ref{lem-bound-enum-last}) that are at the core of the proof of the enumerating part of Theorem~\ref{thm-bound-all}.

  The proof of Lemma~\ref{lem-bound-enum-one} already matches our new standards, i.e.~the list of solution for a unary query can be enumerated with delay truly $O(1)$, after an additional preprocessing depending on the degree of the graphs (triply exponential in the size of the query).

  Concerning the improvement of Lemma~\ref{lem-bound-enum-last}, we need to improve the structure using Lemma~\ref{lem-skip} and Lemma~\ref{fast-test}. For each list $L_H$ of the Basic Data Structure, we compute the skip pointers $\sk(b,I)$ (as in Lemma~\ref{lem-skip}). Additionally, for every distance query $\dst{r}(x,y)$ with $r \le 2^\ell$, we maintain the data structure allowing to test the solutions in time $O(1)$.

  During the enumeration phase, instead or running a BFS algorithm, we can test all the distances in time $O(k^2)$. We can therefore test whether we are looking at a solution in time $O(k^2)$. If not we can use a skip pointer that directly gives us the next solution.

  The procedure in the proof of Lemma~\ref{lem-bound-enum-last} work in time $O(k^2)$ after the additional preprocessing (triply exponential in the size of the query).

  The overall delay becomes $O(k^3)$ because of the $k$ recursive calls produced by the global procedure.

  \subsection{Fast algorithm for the testing problem}\label{app-fast-test}
  \begin{proof}[Proof of Lemma~\ref{fast-test}]
  The only thing that we add (before knowing the query) is a structure that contains the value of the function $\delta_{r}(x,y)$ that test the distance up to threshold $r$. I.e.~$\delta_{r}(a,b)=r'$ with $r'\le r$ if $G\models \dist_{=r'}(a,b)$ and $\delta_{r}(a,b)$ is undefined otherwise. Here, $r$ is obtain from $\ell$ with Corollary~\ref{cor-fo-queries}, and $r=2^{O(\ell)}$

  The size of the domain of $\delta_{r}$ is $O(|G|\cdot d^{r})$ and a simple BFS algorithm up to distance $r$ for all node of the graphs is enough to compute $\delta_{r}$. When an update occurs, we have to perform the BFS for all $r$-impacted nodes, which can be done in constant time (triply exponential in $\ell$).

  Let $(G,d,\ell)$ fixed. Let $r = 2^{O(\ell)}$, we assume that $BDS(G,d,\ell)$ and the function $\delta_{r}(\cdot,\cdot)$ has been computed.

  We are now given an $\FO$ query $\ph(\x)$ with quantifier rank $q$ and arity $k$ such that $q+k\le \ell$. Similarly to the Naive algorithm, we can assume without loss of generality that $\ph(\x) = \alpha_1(\x_1)\wedge \ldots \wedge \alpha_p(\x_p) \wedge \Delta_{2r,\tau} (\x)$ where $(\x_1,\ldots,\x_p)$ is a partition of $\x$, and all $\alpha_i(\x_i)$ are $r$-locals centered.

  We then compute a final structure in constant time (triply exponential in $|\ph|$) before any tuple is given. The structure is composed of function $SOL_{i,H}(\cdot)$ for all $i\le p$ and $H$ in $\H$. We define and compute these function as follows:

  For all $i\le p$, for all $H$ in $\H$ and for all tuple $\a_i$ (with $|\a_i|=|\x_i|$), we test whether $H\models \alpha_i(\a_i)$. If so, we set $SOL_{i,H}(\a_i) := 1$. Those functions are stored and ready to be accessed in constant time.

  The complexity of this final phase is triply exponential in $\ell$ because there are only $2^{d^{r}}$ graphs in $\H$. For a fixed $H$, there are only $(d^{r})^k$ tuples and the final tests can be performed in time $2^{d^{r}}$.

  We are now given a tuple $\a$ in $G$ and we want to test whether $G\models \ph(\a)$.
  \begin{itemize}
    \item For all $(a_j,a_{j'})$ in $\a$ we check $\delta_{2r}(a_j,a_{j'})$. Therefore in time $O(k^2)$ we know whether $\Delta_{2r,\tau}(\a)$ holds in $G$.
    \item For all $\a_i$ in $\a$, let $a^1_i$ be it's first elements.
    \item Let $H_i = \X(a_i^1)$ and $f_{a_i^1}(\cdot)$ the isomorphism from $\N_{r}^G(a^1_i)$ to $H_i$.

    Since $\Delta_{2r,\tau}(\a)$ holds, we have that $\a_i \subseteq \N^G_{k\cdot r}(a_i^1)$.
    Hence $\N_{r}(\a_i) \subseteq \N_{r\ell}(a^1_i)$.
    Therefore $G \models \alpha_i(\a_i)$ if and only if $H_i\models \alpha_i(f_{a_i^1}(\a_i))$ and this is true if and only if $SOL_{i,H_i}(f_{a_i^1}(\a_i))=1$, which can be tested in time $O(1)$.
  \end{itemize}

  This ends the testing algorithm. The structure can be updated in time triply exponential in $\ell$, and with an additional phase performed once the query is known, we can test whether a given tuple is a solution in time $O(k^2)$.
  \end{proof}
\end{appendix}

\end{document}